\begin{document}

\title{Network Flow-based Simultaneous Retiming and Slack Budgeting for Low Power Design}

\iftrue
\author
{
\IEEEauthorblockN{Bei Yu\IEEEauthorrefmark{1}, Sheqin
Dong\IEEEauthorrefmark{1}, Yuchun Ma\IEEEauthorrefmark{1}, Tao
Lin\IEEEauthorrefmark{1}, Song Chen\IEEEauthorrefmark{2} and Satoshi
Goto\IEEEauthorrefmark{2}}
\IEEEauthorblockA{
\IEEEauthorrefmark{1}Department of Computer Science \& Technology, Tsinghua University, Beijing, China}
\IEEEauthorblockA{\IEEEauthorrefmark{2}Graduate School of IPS, Waseda University, Kitakyushu, Japan}
Email: \{b-yu07@mails, dongsq@mail\}.tsinghua.edu.cn}
\fi

\maketitle


\newtheorem{problem}{Problem}
\newtheorem{definition}{Definition}
\newtheorem{theorem}{Theorem}
\newtheorem{lemma}{Lemma}

\begin{abstract}
Low power design has become one of the most significant requirements when CMOS technology entered the nanometer era.
Therefore, timing budget is often performed to slow down as many components as possible so that timing slacks can
be applied to reduce the power consumption while maintaining the performance of the whole design.
Retiming is a procedure that involves the relocation of flip-flops (FFs) across logic gates to
achieve faster clocking speed. In this paper we show that
the retiming and slack budgeting problem can be formulated to a convex cost dual network flow problem.
Both the theoretical analysis and experimental results show the
efficiency of our approach which can not only reduce power consumption but also speedup previous work.
\end{abstract}

\section{Introduction}
Timing constraint design and low power design have become
significant requirements when the CMOS technology entered the
nanometer era. On the one hand, more and more devices trend to be
put in the small silicon area while at the same time the clock
frequency is pushed even higher. As an effective timing optimization
scheme, retiming is a procedure that involves the relocation of
flip-flops (FFs) across logic gates to achieve faster clocking
period. On the other hand, to tackle the tremendous growth in the
design complexity, timing budgeting is performed to relax the timing
constraints for as many components as possible without violating the
system's timing constraint. Therefore, both retiming and timing
budget might influence the timing distribution of the design
greatly.

Since Leiserson and Saxe proposed the idea of retiming in 1983
\cite{retiming_Algorithmica1991}, it has become one of the most
powerful sequential optimization techniques. In
\cite{retiming_TVLSI98Sachin}, the min-area retiming problem was
solved by min-cost network flow algorithm. Recent publications
\cite{retiming_ASPDAC05Zhou} and \cite{retiming_ACM08Zhou} proposed
a very efficient retiming algorithm for minimal period by algorithm
derivation. \cite{retiming_DAC06Zhou} and \cite{retiming_DAC08Zhou}
respectively presented efficient incremental algorithms for min-period
retiming under setup and hold constraints, and min-area retiming
under given clock period.

For timing-constrained gate-level synthesis, timing slack is an
effective method for circuit's potential performance improvement.
The components with relaxed timing constraints can be further
optimized to improve system's area, power dissipation, or other
design quality metrics. The slack budgeting problem has been studied
well. Some of the previous slack budgeting approaches are suboptimal
heuristics such as Zero-Slack Algorithm (ZSA)
\cite{slack_TCAD89Nair}.
\cite{slack_DAC96Maji}\cite{slack_TCAD02Maji} formulated the slack
budgeting problem as Maximum-Independent-Set (MIS) on sensitive
transitive closure graph. In \cite{slack_ICCAD04Majid} and
\cite{slack_TCAD06Majid}, authors proposed combinatorial methods
based on net flow approach to handle the slack budget problem.

Budgeting problem can be extended to describe exactly real-word
applications, such as gate resizing, multiple $Vdd$
and multiple $Vth$ assignment \cite{slack2_ISLPED03Nguyen}\cite{slack2_DAC04Srivastava}\cite{slack2_ISLPED04Kulkarni}.
Since the number of logically equivalent cells in a library is limited,
it is reasonable to limit the possible slack value in real designs .
In \cite{slack_ISQED08Ma}, Qiu et al. showed that power reduction is
not proportional to the slack amount and propose a piecewise linear
model to approximate the relationship between slack and power
reduction. In this paper, we adopt the same model and consider discrete
slack budgeting problem. Note that our method can be easily
transferred into continuous slack budgeting problem.

\begin{figure} [tb]
  \centering%
  \subfloat[]{%
    \label{fig:FFexample1}
    \includegraphics[width=0.23\textwidth]{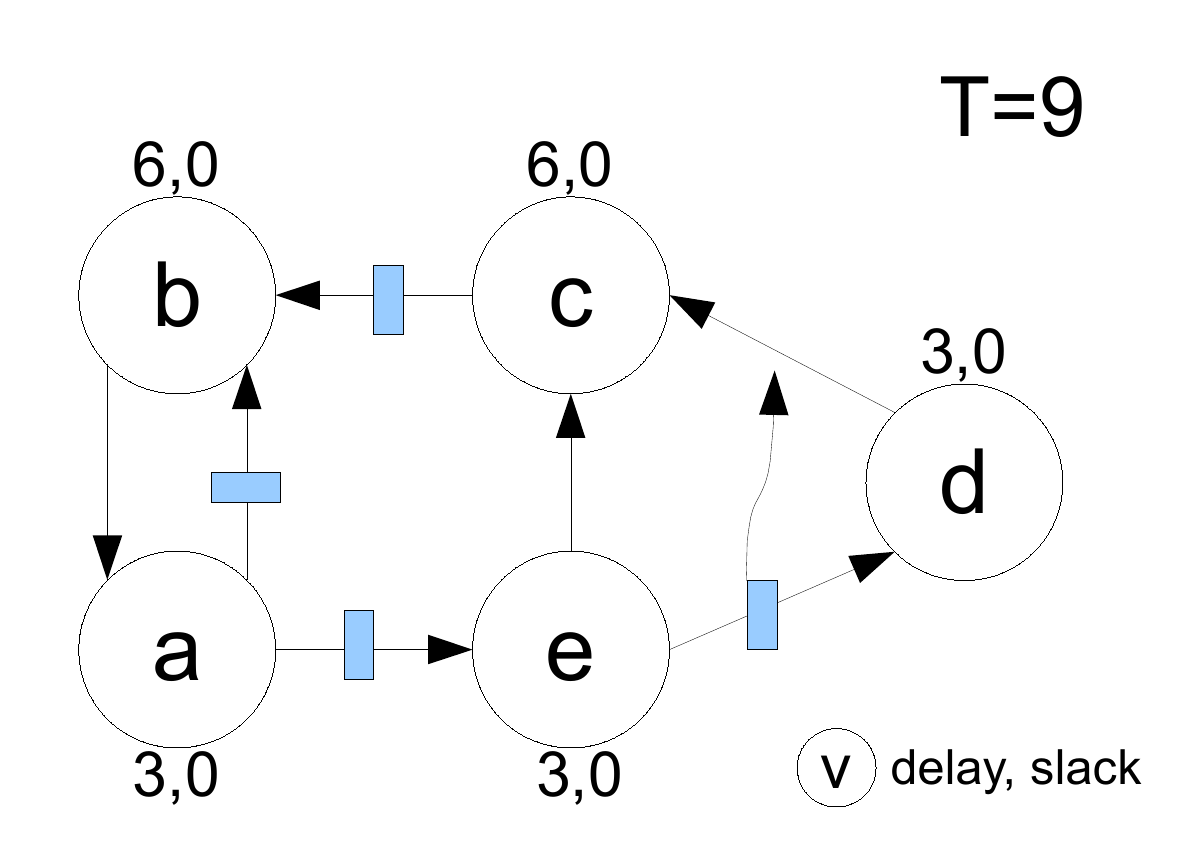}}\hspace{0.1em}%
  \subfloat[]{%
    \label{fig:FFexample2}
    \includegraphics[width=0.23\textwidth]{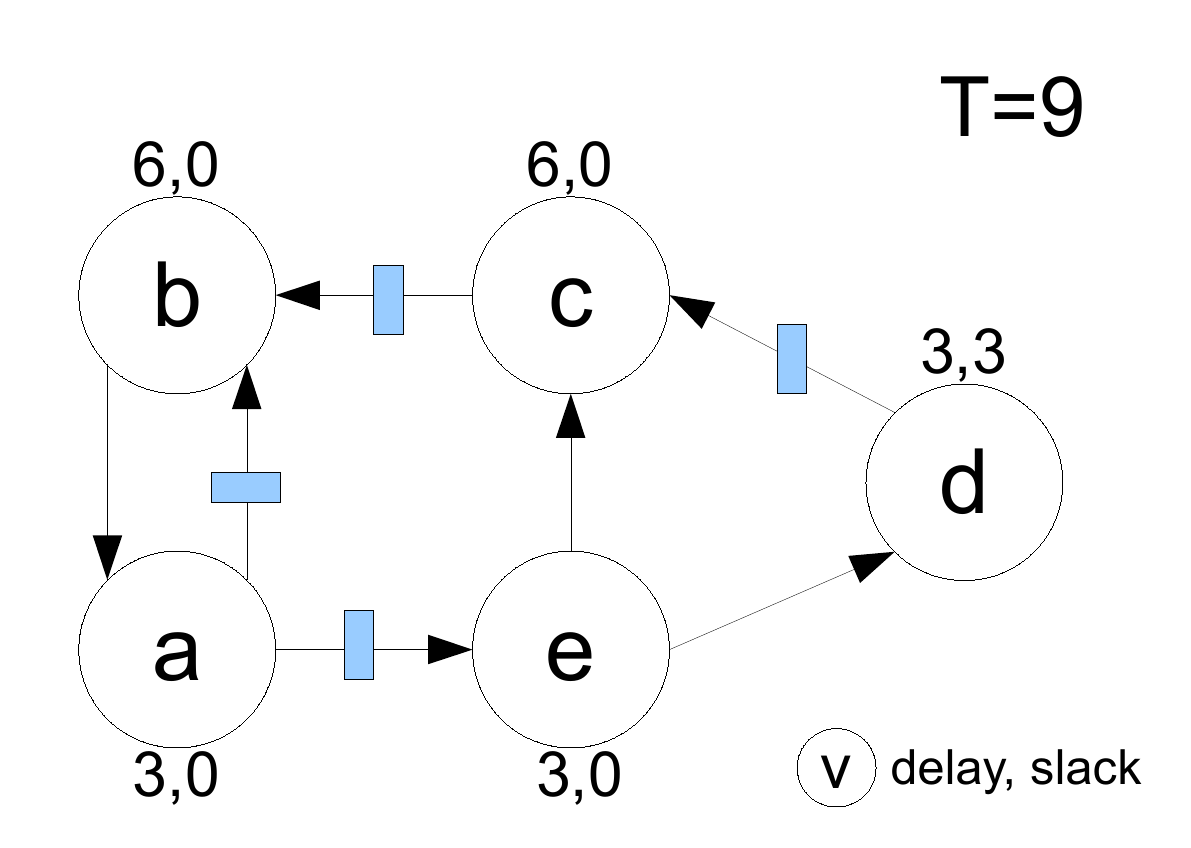}}
  \caption{Relocate FFs to increase potential slack without violating timing constraint. ~(a)No potential slack in this circuit. ~(b)moving the FF from edge $de$ to edge $cd$, the potential slack can
be increased from $0$ to $3$.}
  \label{fig:FFexample}
\end{figure}

Nearly all the existing slack budgeting algorithms are either used
for combinatorial circuit, or limited to fixed FF locations. At the
early design stages, it is flexibility to schedule pipeline or
timing distribution to obtain more timing slack. As shown in Fig.
(\ref{fig:FFexample}), the period of a circuit is minimized with the
delay and slack labeled beside each gate as well. It is seen that
there is no potential slack in this circuit. However, if retiming
and slack budget process is taken, $i.e$. moving the FF from edge
$de$ to edge $cd$, the potential slack can be increased from $0$ to
$3$, keeping the period minimized at the same time.

A simultaneous retiming and slack budgeting algorithm for dual-Vdd
programmable FPGA power reduction was proposed in
\cite{retiming_DAC06Hu}. In \cite{slack_DATE07Zhou} Lin et al. proved that slack budgeting
problem can be viewed as a convex retiming problem. However they
failed to formulate retiming and slack budgeting simultaneously. In \cite{retiming_ASPDAC10Liu} authors
proposed a simultaneously slack budgeting and incremental retiming
algorithm to maximize the potential slack by retiming for
synchronous sequential circuit. They proposed a reasonable
algorithm flow, however, their solution quality suffers in two
aspects. First, there was no guarantee that the algorithm will get
optimal solution because iterative strategy is easily trapped in
local optimum. Besides, the slack budget problem was translated to a
Maximal Independent Set (MIS) problem, which is a NP-hard problem.

\cite{flow_2001Ahuja} showed that for an Integer Linear Programming (ILP)
with separable convex objective functions and special form of constraints,
it can be viewed as convex cost dual network flow problem and solved in polynomial time.
This model has been adopted in various works, such as buffer insertion
\cite{flow_ICCAD05Zhou}, multi-voltage supply \cite{mvs_ICCAD08MA}\cite{mvs_GLSVLSI09Yu},
clock skew scheduling \cite{flow_ASPDAC07Zhou} and slack budgeting \cite{slack_DATE07Zhou}.

In this paper we first formulate retiming and slack budgeting
problem as an Integer Linear Programming (ILP) problem. Since ILP has
been listed to be one of the known NP-hard problems, we then show
how to transform this problem to the convex cost dual
network flow problem with just a little loss of optimality. Experimental results show that our algorithm
can not only reduce power consumption, increase total slack budgeting,  but also effectively speedup previous work.

The remainder of this paper is organized as follows. Section
\ref{sec:problem} defines the simultaneous slack budget and retiming
problem. Section \ref{sec:solution} presents our algorithm flow.
Section \ref{sec:result} reports our experimental results. At last,
Section \ref{sec:conclusion} concludes this paper.

\section{Problem Formulation}
\label{sec:problem}

As shown in \cite{retiming_Algorithmica1991}, we model a synchronous
sequential circuit as a directed graph $G(V, E, d, w)$, each vertex
$i \in V$ represents a combinational gate and each edge $(i, j) \in
E$ represents a signal passing from gate $i$ to $j$. Non-negative gate delays are given as vertex weights $d: V \to R$.
Non-negative integer $w: E \to Z$ as the edge weight represents the number of FFs
on the signal pass. The max clock period is given as $T$.

For each vertex, three non-negative labels, $a_i/\gamma_i/s_i$,
represent the latest arrival time, require time, and slack of vertex
$i$. $a_i$ and $\gamma_i$ can be calculated as follows:
\begin{equation}
    \label{eq:a}
    \left\{
      \begin{array}{lll}
        a_i = d_i & if & w(k, i) > 0 \ or \ i \in PI \\
        a_i = max_j (a_j+d_j) & & \forall j \in FI(i)
      \end{array}
    \right.
\end{equation}
\begin{equation}
    \label{eq:gamma}
    \left\{
      \begin{array}{lll}
        \gamma_i = T & if & w(k, i) > 0 \ or \ i \in PO \\
        \gamma_i = min_j (\gamma_j-d_j) & & \forall j \in FI(i)
      \end{array}
    \right.
\end{equation}
where $PI$ is set of all primary inputs and $PO$ is set of all
primary outputs. $FI(i)$ and $FO(i)$ represent the incoming and
outgoing gates to gate $i$ respectively. Then slack $s_i$ is then
calculated by
\begin{equation}
    \label{eq:s}
    s_i = \gamma_i - a_i
\end{equation}

A retiming of a circuit $G$ is an integer-valued vertex-labeling
$r$, which represent how many FFs are moved from the outgoing edges
to the incoming edges of each vertex. Thus the number of FFs on edge
$(i, j)$ with label $r$ is formulated as follow:
\begin{equation}
    w_{i, j} + r_j - r_i \notag
\end{equation}

\begin{definition}{\textbf{Power Slack Curve}}
- Each gate $i$ is given $k$ discrete slack levels, and the power-slack tradeoff is represented
by $\{(si^1,P(si^1)), \cdots$, $(si^k,P(si^k))\}$. In the Power Slack Curve,
each point is connected to its neighboring point(s) by a linear segment.
\end{definition}

Based on the relationship between power reduction and slack provided by \cite{slack_ISQED08Ma},
we assume Power Slack Curve is a convex decreasing function.

\begin{definition}{\textbf{Simultaneous Slack Budget and Retiming Problem}}
- Given a directed graph $G=(V, E, d, w)$ representing a synchronous
sequential circuit, and period constraint $T$, we want to find FFs
reallocation represented by $r$, such that the power consumption
obtained by slack budgeting is minimized under the period
constraint.
\end{definition}

According to the above definitions and notations, the simultaneous
slack budget and retiming problem can be easily formulated into the
following mathematical program.
\begin{align}\label{eq:ssbr}
    \textrm{min}  & \ \sum_{i\in V} P(s_i) & \tag{$I$}\\
    \textrm{s.t}. & \ \ (\ref{eq:a}) - (\ref{eq:s})          & {} \nonumber\\
                  & r_j - r_i \ge - w_{i, j}         & \forall (i, j) \in E    {} \notag\\
                  & s_i \in \{s_i^1, \cdots, s_i^k\} & \forall i \in V         {} \notag\\
                  & a_i \le T                        & \forall i \in V         {} \notag
\end{align}

\section{Methodology}
\label{sec:solution}
\subsection{MILP Formulation}

The MILP formulation for retiming synchronous circuits is originally
presented in \cite{retiming_Algorithmica1991} to minimize clock
period. The clock period $\Phi(G) \le T$ if and only if there exists
an assignment of real values $a_i$ and an integer value $r_i$ to
each vertex $i \in V$ such that the following conditions are
satisfied:
 \begin{eqnarray}
        a_i \ge d_i + s_i          & \forall i \in V \\
        a_i \le T                    & \forall i \in V  \\
        r_i - r_j \le w_{ij}         & \forall (i, j) \in E \\
        a_j \ge a_i + d_i + s_i & if \ r_i - r_j = w_{ij}
 \end{eqnarray}

Suppose $R_i = r_i + a_i/T$, then $a_i = T \cdot R_i - T \cdot r_i$.
The problem can be formulated as (\ref{eq:model1}).
\begin{align}
    \label{eq:model1}
    \textrm{min}  &     \ \ \ \  \sum_{i\in V} P(\bar s_i)& \tag{$\uppercase\expandafter{\romannumeral2}$}\\
    \textrm{s.t}.\ \ & \bar R_i - \bar r_i \ge \bar s_i            &\forall i \in V        \label{model1:Rr>s}\tag{$IIa$}\\
            & \bar R_i - \bar r_i \le T                            &\forall i \in V        \label{model1:Rr<c}\tag{$IIb$}\\
            & \bar r_j - \bar r_i \ge - T \cdot w_{ij}             &\forall (i, j) \in E   \label{model1:rr<-cw}\tag{$IIc$}\\
            & 0 \le \bar R_i, \bar r_i \le \bar N_{ff}             &\forall i \in V       \label{model1:Rr}\tag{$IId$}\\
            & \bar s_i =\{\bar s_i^1, \cdots, \bar s_i^k\}         &\forall i \in V       \label{model1:si} \tag{$IIe$}\\
            & 0 \le \bar s_i \le T                                 &\forall i \in V       \label{model1:s}\tag{$IIf$}\\
            & \bar R_j - \bar R_i \ge t_{ij}                       &\forall (i, j) \in E    \label{model1:RR>tcw}\tag{$IIg$}\\
            & t_{ij} \ge \bar s_j - T \cdot w_{ij}                 &\forall (i, j)\in E \label{model1:t>s}\tag{$IIh$}
\end{align}
where $\bar N_{ff} = N_{ff} \cdot T$, $\bar s_i = d_i + s_i$, $\bar
r_i = r_i \cdot T$ and $\bar R_i = R_i \cdot T$. For each gate $i$,
$\bar s_i^j = s_i^j+d_i (j=1, \cdots, k)$.

This problem can be solved by common ILP solver. However,
computationally ILP is one of the most difficult combinatorial
optimization problems and the runtime is unaccepted even if the
problem size is small. In the following subsections, we will explain
how to transform this problem to a convex cost dual network flow
problem.

\subsection{Formulation Simplification}
Constraint (\ref{model1:t>s}) make problem (\ref{eq:model1}) too
complex to solve by network flow-based algorithm. First we consider
a more simple formulation (\ref{eq:model2}), which removes
constraint (\ref{model1:t>s}). To compensate the lose of accuracy,
we add penalty function $P(t_{ij})$ in objective function.
\begin{align}
    \label{eq:model2}
    \textrm{min}  & \ \  \sum_{i\in V} P(\bar s_i) + \sum_{(i, j)\in E} P(t_{ij})     \tag{$\uppercase\expandafter{\romannumeral3}$}  \\
    \textrm{s.t}. & \ \  (IIa) - (IIg)                                {} \notag\\
                  & \ \  t_{ij} \ge - c \cdot w_{ij}, \ \ \  \forall (i,j)\in E {} \notag
\end{align}
where $P(t_{ij}) = P(\bar s_j)/k$, and $k$ is a coefficient. Here we
set $k = \sum_i (1 - w_{ij})$\footnote{We suppose for each $(i,
j)\in E$, $w_{ij}$ is $0-1$ variable.}.

Given solution of problem (\ref{eq:model2}) $\bar s_i(i=1,\dots, m)$
and $t_{ij}(\forall (i,j)\in E)$, we propose a heuristic method to
generate solution of problem (\ref{eq:model1}).
\begin{eqnarray}
    \label{eq:transolution}
    t_{i,j}\ge & \bar s_j - c \cdot w_{ij} \Rightarrow \bar s_j  = \textrm{min}(t_{ij}+c \cdot w_{ij}), \forall i \in FI(j)
\end{eqnarray}

We denote the $\bar s_j$ got in (\ref{eq:transolution}) as $\bar
s_j(\Omega)$ and $\bar s_j$ got from problem (\ref{eq:model2}) as
$\bar s_j(\Theta)$, then we can get $\bar s_j$ in problem
(\ref{eq:model1}) as follows:
\begin{eqnarray}
    \bar s_j & = & \textrm{min} [\bar s_j(\Omega), \bar s_j(\Theta)] {}\notag \\
        & = & \textrm{min} [\textrm{min}(t_{ij}+c \cdot w_{ij}),
        \bar s_j(\Theta)], \quad \forall i \in FI(j)
\end{eqnarray}

By now we have build the connection between solution of problem
(\ref{eq:model1}) and problem (\ref{eq:model2}). After we calculate
the solution of (\ref{eq:model2}), we can then get the solution of
(\ref{eq:model1}). In the next subsection, we will prove problem
(\ref{eq:model2}) can be transformed to convex cost dual network
flow problem.

\subsection{Remove Redundant Constraint}
In this subsection we will prove that without loss of optimality,
problem (\ref{eq:model2}) can remove constraint $\bar R_i - \bar r_i
\le T$.

Let $s_i^*$ denote the value of $s_{i}$ for which $P(\bar s_i)$ is
minimum. In case there are multiple values for which $P(\bar s_i)$
is minimum, the minimum value will be chosen. Let us define the
function $Q(\bar s_i)$ in the following manner:
\begin{equation}
    \label{eq:Q}
    Q(\bar s_{i}) = \left\{
      \begin{array}{lll}
        P(\bar s_i^*) & if & \bar s_i \le s_i^* \\
        P(\bar s_i)   & if & \bar s_i > s_i^*
      \end{array}
    \right.
\end{equation}

Now consider the following problem (\ref{eq:model2'}), which
replaces ($IIa$) and ($IIb$) by $\bar R_i - \bar r_i = \bar s_i$:
\begin{align}
    \label{eq:model2'}
    \textrm{min}  & \ \  \sum_{i\in V} Q(\bar s_i) + \sum_{(i, j)\in E} P(t_{ij})&      \tag{$III'$}  \\
    \textrm{s.t}. & \ \  (IIc)-(IIg)          {}\notag\\
                  & \ \  \bar R_i - \bar r_i = \bar s_i            &\forall i \in V        {}\notag\\
                  & \ \  t_{ij} \ge - T \cdot w_{ij}    &\forall (i, j) \in E {} \notag
\end{align}

\begin{theorem}
    \label{theo:1}
    For every optimal solution ($\bar R, \bar r, \bar s$) of problem (\ref{eq:model2}), there is
    an optimal solution ($\bar R, \bar r, \hat s$) of problem (\ref{eq:model2'}),
    and the converse also holds.
\end{theorem}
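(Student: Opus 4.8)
The plan is to prove the equivalence not by comparing objective values term by term, but by reducing both programs to a single convex program in the potentials $(\bar R,\bar r)$ together with the edge variables $t_{ij}$. The key observation is that in (\ref{eq:model2}) each slack $\bar s_i$ occurs only in the convex term $P(\bar s_i)$ and in the single constraint (IIa), $\bar s_i\le \bar R_i-\bar r_i$; hence for fixed potentials the optimal slack is whatever value minimizes $P$ over $\{\bar s_i\le \bar R_i-\bar r_i\}$, and the resulting node cost, viewed as a function of $u_i:=\bar R_i-\bar r_i$, is exactly $Q(u_i)$. I will therefore show that eliminating $\bar s$ from (\ref{eq:model2}) produces precisely the objective $\sum_i Q(\bar R_i-\bar r_i)+\sum_{(i,j)}P(t_{ij})$ of (\ref{eq:model2'}) over the same feasible set of potentials, so that the two problems share their optimal $(\bar R,\bar r)$.

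Concretely, for the direction from (\ref{eq:model2}) to (\ref{eq:model2'}) I take an optimum $(\bar R,\bar r,\bar s)$ and set $\hat s_i:=\bar R_i-\bar r_i$, which makes the equality constraint of (\ref{eq:model2'}) hold by construction. Feasibility is then immediate: $0\le \hat s_i\le T$ follows from (IIa), (IIb) and $\bar s_i\ge 0$, giving (IIf), while the potential-only constraints (IIc)--(IIe), (IIg) and the bound $t_{ij}\ge -T\,w_{ij}$ are inherited unchanged. In particular $\bar R_i-\bar r_i=\hat s_i\le T$ now follows from (IIf) alone, so (IIb) is implied and may be dropped --- this is the asserted removal of $\bar R_i-\bar r_i\le T$. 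For the converse I reverse the map: from an optimum $(\bar R,\bar r,\hat s)$ of (\ref{eq:model2'}) I recover $\bar s_i$ as the $P$-minimizing feasible slack under (IIa), namely $\bar s_i=\min(\hat s_i,\bar s_i^*)$, and the same constraints show $(\bar R,\bar r,\bar s)$ is feasible for (\ref{eq:model2}).

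The main obstacle is the optimality transfer, which I reduce to the single identity $Q(u_i)=\min\{P(\bar s_i):\ 0\le \bar s_i\le u_i\}$ for every feasible $u_i=\bar R_i-\bar r_i$. I will establish this from the definition of $Q$ together with the convexity of $P$, splitting at the minimizer $\bar s_i^*$: when $u_i\le \bar s_i^*$ the bound (IIa) is active and, $P$ being decreasing there, both sides equal $P(u_i)$; when $u_i>\bar s_i^*$ the interior minimizer $\bar s_i^*$ is admissible and, $P$ being flat beyond it, both sides equal $P(\bar s_i^*)$. Granting this identity, the reduced objective of (\ref{eq:model2}) coincides pointwise with that of (\ref{eq:model2'}) over the common feasible region, so the two constructions above preserve the objective value and carry optima to optima in both directions. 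As a byproduct, the identity shows $Q$ is convex and flat for $u_i\ge \bar s_i^*$; since $\bar s_i^*\le T$, no optimal solution gains by pushing $u_i$ past $\bar s_i^*$, which is the structural reason $\bar R_i-\bar r_i\le T$ is non-binding and can be removed. The only remaining point needing care is the discreteness condition (IIe): I will treat it through the continuous relaxation underlying the network-flow reduction, for which the recovered $\hat s_i$ and $\bar s_i$ above are admissible.
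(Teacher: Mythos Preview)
Your approach is essentially the same as the paper's: both fix the potentials $(\bar R,\bar r)$, observe that the optimal $\bar s_i$ in (\ref{eq:model2}) is forced to be the $P$-minimizer subject to (IIa), set $\hat s_i=\bar R_i-\bar r_i$ for (\ref{eq:model2'}), and verify $P(\bar s_i)=Q(\hat s_i)$ by a two-case split at $s_i^*$. Your framing via the partial-minimization identity $Q(u)=\min\{P(s):s\le u\}$ is a cleaner packaging of the same case analysis, and your converse map $\bar s_i=\min(\hat s_i,s_i^*)$ is in fact the correct one---the paper writes $\max$, which does not satisfy (IIa) when $\hat s_i<s_i^*$.
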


\begin{proof}
Consider an optimal solution ($\bar R, \bar r, \bar s$) of
(\ref{eq:model2}), we show how to construct an optimal solution
($\bar R, \bar r, \hat s$) of (\ref{eq:model2'}) with the same cost.
There are two cases to consider:

\textbf{Case 1: $\bar R_i - \bar r_i \ge s_i^*$}. It follows from
(\ref{model1:Rr>s}) and the convexity of $P(\bar s_i)$ that $\hat
s_i = s_i^*$. In this case, we set $\hat s = \bar R_i - \bar r_i$.
It follows from (\ref{eq:Q}) that $P(\bar s_i) = Q(\hat s_i)$.

\textbf{Case 2: $\bar R_i - \bar r_i < s_i^*$}. Similar to case 1,
we can get $\bar s_i = \bar R_i - \bar r_i$. In this case, we set
$\hat s_i = \bar R_i - \bar r_i$. It follows from (\ref{eq:Q}) that
$P(\bar s_i) = Q(\hat s_i)$.

Similarly, it can be shown that if ($\hat R, \hat r, \hat s$) is an
optimal solution of (\ref{eq:model2'}), then the solution ($\hat R,
\hat r, \bar s$) constructed in the following manner is an optimal
solution of (\ref{eq:model2}): $\bar s_i = $max$\{s_i^*, \hat
s_i\}$.

\end{proof}

\begin{theorem}
 \label{theo:2}
The constraint $\bar R_i - \bar r_i \le T$ in problem
(\ref{eq:model2}) can be removed.
\end{theorem}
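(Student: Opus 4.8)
The plan is to obtain Theorem~\ref{theo:2} as a direct consequence of Theorem~\ref{theo:1}, by showing that the bound $\bar R_i - \bar r_i \le T$ is already redundant in the equivalent formulation~(\ref{eq:model2'}) and then transporting this redundancy back to~(\ref{eq:model2}). First I would invoke Theorem~\ref{theo:1}, which supplies a cost-preserving correspondence between the optimal solutions of~(\ref{eq:model2}) and those of~(\ref{eq:model2'}); this lets me reason in~(\ref{eq:model2'}), where the pair (IIa)--(IIb) has been collapsed into the single equality $\bar R_i - \bar r_i = \bar s_i$.

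The key observation is then immediate. In~(\ref{eq:model2'}) the equality $\bar R_i - \bar r_i = \bar s_i$ together with the box constraint (IIf), $0 \le \bar s_i \le T$, forces $\bar R_i - \bar r_i = \bar s_i \le T$ at every feasible point. Hence the inequality $\bar R_i - \bar r_i \le T$ holds automatically and is redundant in~(\ref{eq:model2'}); in particular the optimal solution of~(\ref{eq:model2}) recovered through the Theorem~\ref{theo:1} correspondence inherits $\bar R_i - \bar r_i \le T$, so no optimizer of~(\ref{eq:model2}) is destroyed by dropping the bound.

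To make the transport back to~(\ref{eq:model2}) airtight I would compare optimal values directly. Writing $(\mathrm{III}^-)$ for problem~(\ref{eq:model2}) with (IIb) deleted, removing a constraint only enlarges the feasible set, so $\mathrm{opt}(\mathrm{III}^-) \le \mathrm{opt}(\mathrm{III})$ and it remains to prove the reverse inequality. Here I would take an optimal solution of $(\mathrm{III}^-)$ and normalize it to one that still satisfies $\bar R_i - \bar r_i \le T$ at no greater cost: the objective $\sum_i P(\bar s_i) + \sum_{(i,j)} P(t_{ij})$ contains no $\bar r_i$ and depends on the gap only through the cap $\bar s_i \le \min(\bar R_i - \bar r_i,\, T) = T$ already forced by (IIa) and (IIf), so any excess $\bar R_i - \bar r_i > T$ contributes nothing and can in principle be removed by raising the retiming potential $\bar r_i$ while holding $\bar R$, $\bar s$ and $t$ fixed.

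The main obstacle is precisely this normalization: pushing every gap down to $T$ must be done while preserving the difference constraints (IIc), (IIg) and the bounds (IId), and a naive single-variable increase of $\bar r_i$ can violate (IIc) on outgoing edges or the upper bound in (IId). I expect to resolve it by exploiting the difference-constraint (network-flow) structure of the surviving constraints---reading $\{\bar r_i\}$ as node potentials subject to $\bar R_i - T \le \bar r_i \le \bar R_i - \bar s_i$ and $\bar r_j - \bar r_i \ge -T\,w_{ij}$, and invoking the feasibility (Bellman--Ford / shortest-path) argument underlying the Ahuja-type framework cited in the paper---so that a consistent global reassignment of the $\bar r_i$ exists that caps all gaps at $T$ without disturbing the cost. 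Once this feasibility-preserving normalization is in hand, the normalized point is feasible for~(\ref{eq:model2}), giving $\mathrm{opt}(\mathrm{III}^-) \ge \mathrm{opt}(\mathrm{III})$ and hence equality, which is exactly the claim of the theorem.
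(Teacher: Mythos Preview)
Your first two paragraphs are exactly the paper's proof: invoke Theorem~\ref{theo:1} to pass to the equality formulation~(\ref{eq:model2'}), then note that $\bar R_i-\bar r_i=\bar s_i$ together with (IIf) already forces $\bar R_i-\bar r_i\le T$. The paper stops there.

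Your paragraphs three and four add a layer of rigor the paper does not attempt, and the concern you raise---that dropping (IIb) could in principle create strictly better optima---is legitimate. However, the Bellman--Ford normalization you sketch is heavier than necessary. A cleaner way to close the gap is to observe that the proof of Theorem~\ref{theo:1} itself never uses constraint~(IIb): the forward construction from~(\ref{eq:model2}) to~(\ref{eq:model2'}) relies only on~(IIa), the convexity of $P$, and the definition of $s_i^*$, while the reverse construction lands in a point with $\bar R_i-\bar r_i=\hat s_i\le T$ automatically. Hence Theorem~\ref{theo:1} applies verbatim to your relaxed problem~$(\mathrm{III}^-)$, giving $(\mathrm{III}^-)\leftrightarrow(\mathrm{III}')\leftrightarrow(\mathrm{III})$ directly and making the potential-reassignment argument unnecessary.
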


\begin{proof}
By Theorem \ref{theo:1}, we can transform each constraint in
(\ref{model1:Rr>s}) to an equality constraint. In other words, $\bar
R_i - \bar r_i = \bar s_i$. Because constraint (\ref{model1:s}) ($0
\le \bar s_i \le T$), $\bar R_i - \bar r_i \le T$. So we can remove
constraint $\bar R_i - \bar r_i \le T$.

\end{proof}

\subsection{Transformation to Primal Network Flow Problem}
To further simplify problem (\ref{eq:model2}), we transform $G(V,
E)$ into $\bar G(\bar V, \bar E)$ in such a way that each vertex $i
\in V$ is split into two vertex $\bar r_i$ and $\bar R_i$. So
constraints (\ref{model1:Rr>s}) (\ref{model1:RR>tcw}) and
(\ref{model1:rr<-cw}) can be transformed to the connection
relationship in $\bar E$. $\bar V = \{\bar r_1, \bar R_1, \dots,
\bar r_m, \bar R_m\}$. $\bar E = \bar E_1 \cup \bar E_2 \cup \bar
E_3$, where $\bar E_1$ include edges $(\bar r_i, \bar R_i)$, $\bar
E_2$ include edges $(\bar R_i, \bar R_j)$ and edges $(\bar r_i, \bar
r_j)$ belong to $\bar E_3$. Fig. (\ref{fig:DAG1}) illustrates a
simple DAG $G$ representing a synchronous sequential circuit, and
the transformed DAG $\bar G$ of $G$ is illustrated in Fig.
(\ref{fig:DAG2}).

\begin{figure} [tb]
  \centering%
  \subfloat[]{%
    \label{fig:DAG1}
    \includegraphics[width=0.20\textwidth]{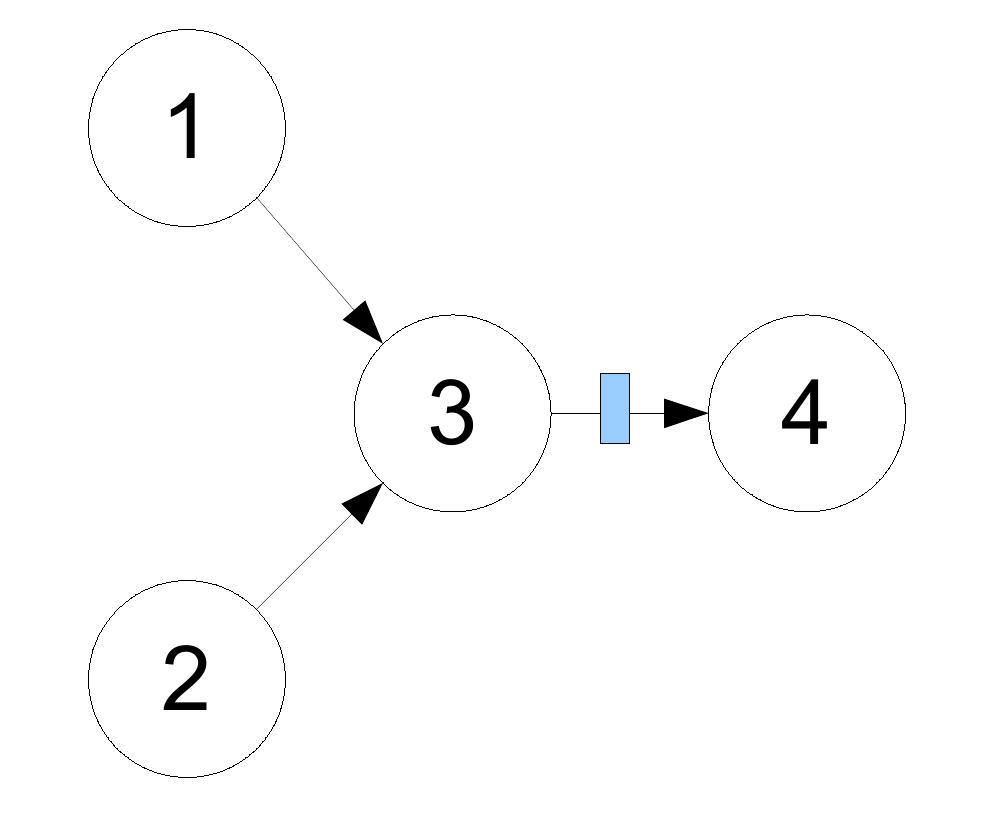}}\hspace{0.1em}%
  \subfloat[]{%
    \label{fig:DAG2}
    \includegraphics[width=0.24\textwidth]{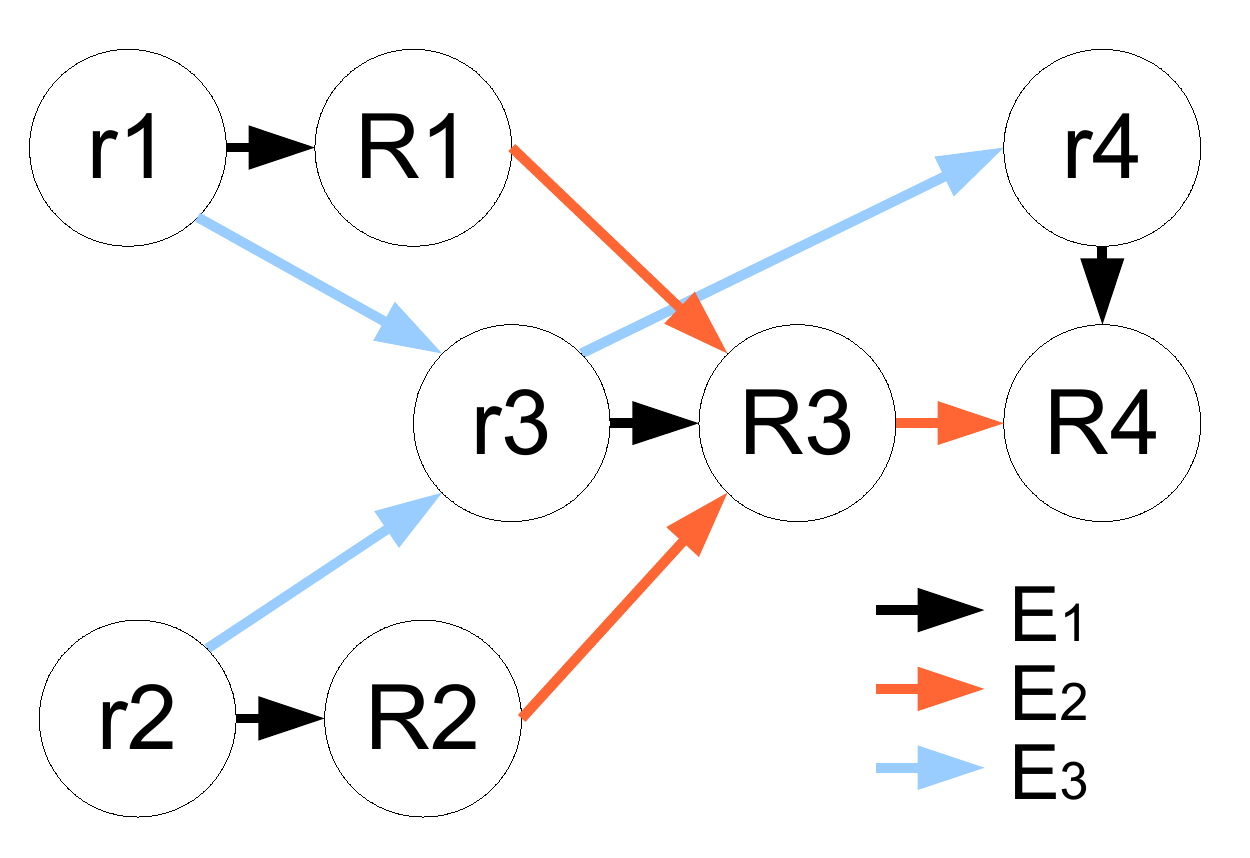}}
  \caption{~(a)The DAG $G$ representing a synchronous sequential circuit. ~(b)The transformed DAG $\bar G$ of $G$.}
  \label{fig:DAGs}
\end{figure}

\begin{figure*} [tb]
  \centering%
  \subfloat[]{%
    \label{fig:PijE1}
    \includegraphics[width=0.24\textwidth]{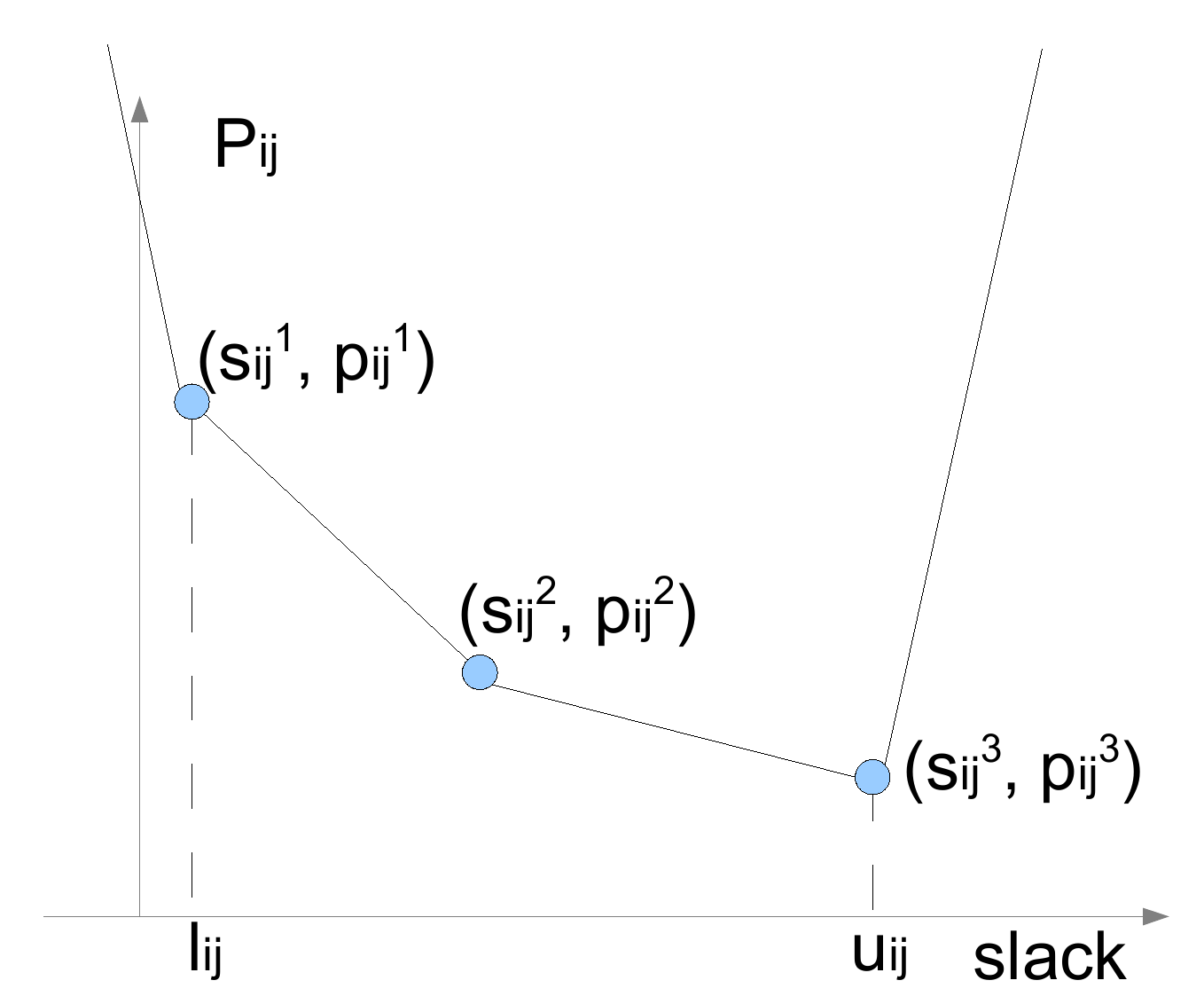}}\hspace{0.3em}%
  \subfloat[]{%
    \label{fig:PijE2}
    \includegraphics[width=0.28\textwidth]{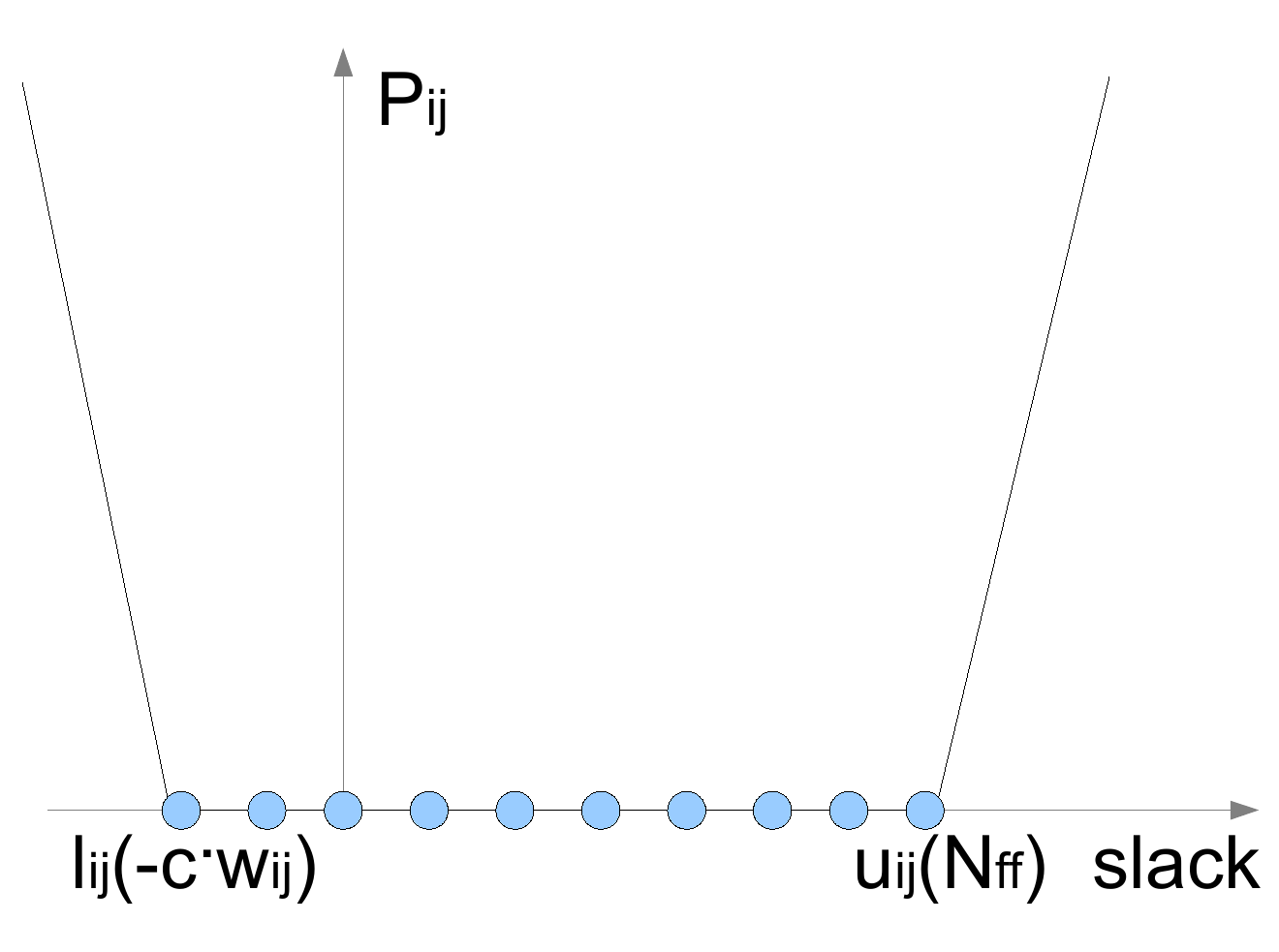}}\hspace{0.3em}
  \subfloat[]{%
    \label{fig:PijE4}
    \includegraphics[width=0.26\textwidth]{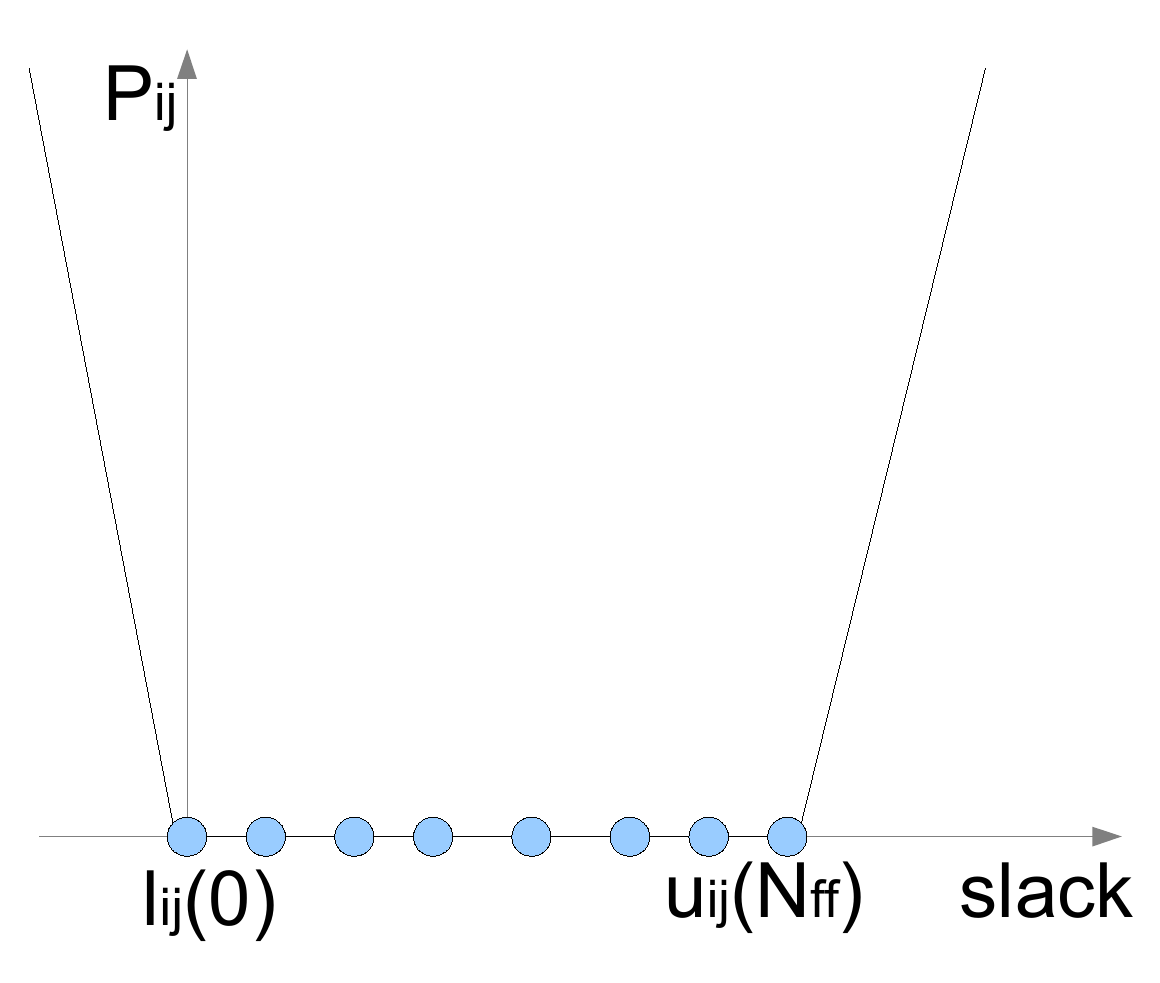}}
  \caption{~The Power-Slack Curve of ~(a)an edge $(i, j) \in E_1 \cup E_2$, here we assume $w_{ij}=0$; ~(b)an edge $(i, j) \in E_3$;~(c)an edge $(i, j) \in
  E_4$.}
  \label{fig:Pij}
\end{figure*}

Now the problem formulation can be simplified as follows:
\begin{align}
    \label{eq:model4}
    \textrm{min}  &     \ \ \ \  \sum_{(i, j) \in \bar E} P(s_{ij})& \tag{$IV$}\\
    \textrm{s.t}.\ \ & \mu_j - \mu_i \ge  s_{ij}                   &\forall (i, j) \in \bar E  \label{eq:cl3a}\tag{$IVa$}\\
            & 0 \le \mu_i \le \bar N_{ff}                    &\forall i \in \bar V       \label{eq:cl3b}\tag{$IVb$}\\
            & l_{ij} \le s_{ij} \le u_{ij}             &\forall (i, j) \in \bar E  \label{eq:cl3c}\tag{$IVc$}
\end{align}
where $s_{ij}$ represents slack assigned to edge from node $i$ to
$j$. For each edge $e(i,j) \in E_1$, if $i=\bar r_p$ and $j=\bar
R_p$, then $s_{ij}=\bar s_{p}$, and $l_{ij} = \bar s_{p}^1$ and
$u_{ij} = \bar s_{p}^k$. For each edge $e(i,j) \in E_2$, $s_{ij} =
\bar s_j - T \cdot w_{ij}$, then $l_{ij} = \bar s_j^1 - T \cdot
w_{ij}$ and $u_{ij} = \bar s_j^k - T \cdot w_{ij}$. For each edge
$e(i,j) \in E_3$, $l_{ij} = - T \cdot w_{ij}$ and $u_{ij} = \bar
N_{ff}$. An example Power-Slack Curve of an edge in $E_1 \cup E_2$
and that of an edge in $E_3$ are illustrated in Fig.
(\ref{fig:PijE1}) and Fig. (\ref{fig:PijE2}), respectively.

We then further eliminate constraints (\ref{eq:cl3b}) and
(\ref{eq:cl3c}). First of all, $P(s_{ij})$ can be modified to
eliminate the bounds on $\bar s_i$ as follows.
\begin{equation}
    \bar P(s_{ij}) = \left\{
      \begin{array}{ll}
        P(u_{ij}) + M(s_{ij} - u_{ij} )& \bar s_{ij} > u_{ij}\\
        P(s_{ij})                    & 0 \le \bar s_i \le T \\
        P(l_{ij}) - M(s_{ij} - l_{ij} ) & \bar s_{ij} < l_{ij}
      \end{array}
    \right.
\end{equation}
where $M$ is a sufficiently large number such that $\bar P(s_{ij})$
is still a convex function.

Similarly, the bounds on $\mu_i$ can also be eliminated by adding
into objective a convex cost function $B(\mu_i)$ defined as follows.
\begin{equation}
    B(\mu_i) = \left\{
      \begin{array}{lll}
        M \cdot(\mu_i - \bar N_{ff})           &if &  \mu_i > \bar N_{ff}\\
        0                                      &if & 0 \le \bar \mu_i \le \bar N_{ff} \\
        -M \cdot \mu_i                         &if &  \mu_i < 0
      \end{array}
    \right.
\end{equation}

After the above simplifications, problem (\ref{eq:model4}) can be
transformed to problem (\ref{eq:model5}):
\begin{align}
    \label{eq:model5}
    \textrm{min}  &     \ \ \ \  \sum_{(i, j) \in \bar E} \bar P(s_{ij}) + \sum_{i\in \bar V}B(\mu_i)& \tag{$V$}\\
    \textrm{s.t}.\ \ & \mu_j - \mu_i \ge  s_{ij}                   &\forall (i, j) \in \bar E \notag
\end{align}

\subsection{Problem Transformation by Lagrangian Relaxation}
Using Lagrangian relaxation to eliminate constraint in problem
(\ref{eq:model5}), get the Lagrangian sub-problem:
\begin{eqnarray}
    \label{eq:lgsubproblem}
    L(\vec x) & = & \sum_{e(i,j)\in{\bar{E}}} \bar P(s_{ij}) + \sum_{i\in \bar
    V}B_i(\mu_i)\nonumber\\
              &   & -\sum_{e(i,j)\in \bar E}(\mu_j-\mu_i - s_{ij})x_{ij}
\end{eqnarray}

It is easy to show that
\begin{equation}
  \sum_{e(i,j)\in \bar E}(u_i-u_j)x_{ij}=\sum_{i \in \bar V}x_{0i}\times \mu_i
\end{equation}
where
\begin{equation}
x_{0i}=\sum_{j:e(i,j)\in \bar E}x_{ij}-\sum_{j:e(j,i)\in \bar
E}x_{ji}
 , \forall i \in V
\end{equation}
Lagrangian subproblem (\ref{eq:lgsubproblem}) can be restated as
follows:
\begin{equation}\label{eq:lgsubproblem2}
  L(\vec{x})=\textrm{min} \sum_{e(i,j)\in \bar E}[P(s_{ij}) +x_{ij}s_{ij}]+\sum_{i\in
  \bar V}[B_i(\mu_i)+x_{0i}\mu_i]
\end{equation}

A start node $v_0$ is added to $\bar V$, $v_0$ interconnects all
other nodes in $\bar V$. We set $s_{0i}=\mu_i, l_{0i}=0, u_{0i}=\bar
N_{ff}$. So $V = \{v_0\} \cup \bar V$. The new edges are denoted as
$E_4$, $E = \bar E \cup E_4$. The Power-Slack curve of an edge $(i,
j)\in E_4$ is illustrated in Fig. (\ref{fig:PijE4}). So we can
transform $L(\vec x)$ as formulation (\ref{eq:lx}).

\begin{align}
    \label{eq:lx}
                     &  \ \  L(\vec{x})=\textrm{min} \sum_{e(i,j)\in E}[P_{ij}(s_{ij})+x_{ij}s_{ij}]  \\
    \textrm{s.t}.    &  \sum_{j:e(i,j)\in E}x_{ij}-\sum_{j:e(j,i)\in E}x_{ji}=0 \quad \forall i \in V \notag\\
                     &  x_{ij} \ge 0 \qquad \qquad \forall (i, j)\in E_1 \cup E_2 \cup E_3 \notag
\end{align}

\subsection{Convex Cost-scaling Approach}
We define function $H_{ij}(x_{ij})$ for each $e(i,j)\in E$ as
follows:
\begin{align}
  H_{ij}(x_{ij})=\textrm{min}_{s_{ij}}\{P_{ij}(s_{ij})+x_{ij}s_{ij}\}
\end{align}

For the $e(i,j)\in E_1$, because the function $H_{ij}(x_{ij})$ is a
piecewise linear concave function of $x_{ij}$, and $\forall e(i,
j)\in E_1$, then $H_{ij}(x_{ij})$ is described in the following
manner \cite{flow_2001Ahuja}: {\setlength\arraycolsep{2pt}
\begin{eqnarray}
H_{ij}(x_{ij}) & =& \left \{
                 \begin{array}{ll}
                   P_{ij}(s_{ij}^k)+s_{ij}^kx_{ij} & 0 \leq x_{ij}\leq b_{ij}(k)\\
                   \dots \\
                   P_{ij}(s_{ij}^q)+s_{ij}^qx_{ij} & b_{ij}(q+1) \leq x_{ij}\leq b_{ij}(q)\\
                   \dots \\
                   P_{ij}(s_{ij}^1)+s_{ij}^1x_{ij} & k \leq x_{ij}\\
                 \end{array}
               \right. \notag
\end{eqnarray}}
where
$b_{ij}(q)=\frac{P_{ij}(s_{ij}^{q-1})-P_{ij}(s_{ij}^q)}{s_{ij}^q-s_{ij}^{q-1}}$.

For the $e(i,j)\in E_1$, similar to $E_2$, then $H_{ij}(x_{ij}) =$
{\setlength\arraycolsep{2pt}
\begin{eqnarray}
H_{ij}(x_{ij}) & =& \left \{
                 \begin{array}{ll}
                   P_{ij}(t_{ij}^k)+t_{ij}^kx_{ij} & 0 \leq x_{ij}\leq b_{ij}(k)\\
                   \dots \\
                   P_{ij}(t_{ij}^q)+t_{ij}^qx_{ij} & b_{ij}(q+1) \leq x_{ij}\leq b_{ij}(q)\\
                   \dots \\
                   P_{ij}(t_{ij}^1)+t_{ij}^1x_{ij} & k \leq x_{ij}\\
                 \end{array}
               \right. \notag
\end{eqnarray}
} where
$b_{ij}(q)=\frac{P_{ij}(t_{ij}^{q-1})-P_{ij}(t_{ij}^q)}{t_{ij}^q-t_{ij}^{q-1}}$,
and $t_{ij}^q = s_{ij}^q - T \cdot w_{ij}$.

For the $e(i,j)\in E_3$, because $P_{ij}(s_{ij})=0$,
\begin{align}
    H_{ij}(x_{ij})=min_{s_{ij}}(s_{ij} x_{i,j})= - T \cdot w_{ij}\cdot x_{i,j},  x_{ij} \ge
    0 \notag
\end{align}

For the $e(i,j)\in E_4$, the variable $x_{i,j}$ is not a Lagrangian
multiplier, and it is bounded by $-M \le x_{ij} \le M$.
\begin{equation}
    H_{ij}(x_{ij}) = \left\{
      \begin{array}{lll}
        0                                      & 0 \le x_{ij} \le M \\
        \bar N_{ff} \cdot x_{ij}               & -M \le x_{ij} \le 0
      \end{array}
    \right. \notag
\end{equation}

Note that these functions $H_{ij}(x_{ij})$ are all concave. We
define $C_{ij}(x_{ij}) = - H_{ij}(x_{ij})$, so that $C_{ij}(x_{ij})$
is a piecewise linear convex function. Then we can subsequently
propose problem (\ref{eq:flowproblem}) as follows:

\begin{align}
    \label{eq:flowproblem}
                     &  \ \  L(\vec{x})=\textrm{min} \sum_{e(i,j)\in E}C_{ij}(x_{ij})  \tag{$\uppercase\expandafter{\romannumeral6}$}\\
    \textrm{s.t}.    &  \sum_{j:e(i,j)\in E}x_{ij}-\sum_{j:e(j,i)\in E}x_{ji}=0 \quad \forall i \in V \notag\\
                     &  0 \le x_{ij} \le M \qquad \qquad \forall (i, j)\in E_1 \cup E_2 \cup E_3 \notag \\
                     &  -M \le x_{ij} \le M \qquad \qquad \forall (i, j)\in E_4 \notag
\end{align}

To transform the problem into a minimum cost flow problem, we
construct an expanded network $G' = (V', E')$. There are four kinds
of edges to consider:
\begin{flushleft}
\begin{itemize}
\item
  $e(i,j)$ in E1: we introduce $k$ edges in $G'$, and the costs of these
  edges are: $-s_{ij}^k, -s_{ij}^{k-1}, \dots -s_{ij}^1$; upper capacities:
  $b_{ij}(k), b_{ij}(k-1)-b_{ij}(k), b_{ij}(k-2)-b_{ij}(k-1), \dots
  M-b_{ij}(2)$, where $M$ is a huge coefficient; lower capacities are all 0.
\item
  $e(i,j)$ in E2: we introduce $k$ edges in $G'$, and the costs of these
  edges are: $-t_{ij}^k, -t_{ij}^{k-1}, \dots -t_{ij}^1$; upper capacities:
  $b_{ij}(k), b_{ij}(k-1)-b_{ij}(k), b_{ij}(k-2)-b_{ij}(k-1), \dots
  M-b_{ij}(2)$, where $M$ is a huge coefficient; lower capacities are all 0.
\item
  $e(i,j)$ in E3: cost, lower and upper capacity is ($c \cdot w_{ij}, 0, M$).
\item
  $e(i,j)$ in E4: two edges are introduced in $G'$, one with cost, lower
  and upper capacity as ($\bar N_{ff}, -M, 0$), another is ($0,0,M$).
\end{itemize}
\end{flushleft}

Using the cost-scaling algorithm \cite{book01:flow}, we can solve
the minimum cost flow problem in $G'$. For the given optimal flow
$x^*$, we construct residual network $G(x^*)$ and solve a shortest
path problem to determine shortest path distance $d(i)$ from node
$s$ to every other node. By implying that $\mu(i) = d(i)$ and
$s_{ij} = \mu(i) - \mu(j)$ for each $e(i,j) \in E_1 \cup E2$, we can
finally solve problem (\ref{eq:model2}).

\section{Experimental Results}
\label{sec:result}

We implemented our algorithm in the C++ programming language and
executed on a Linux machine with eight 3.0GHz CPU and 6GB Memory. 19
cases from the ISCAS89 benchmarks are tested, and the name, number
of gates, number of signal passes, the maximum number of gate
output/inputs, and the minimum period for each case are given in
Table \ref{table:benchmarks}. We used four discrete slack levels for each gate
as $\{0, 10, 20, 33\}$. Energy consumption of the gates with slack level
scaling were found from model in \cite{slack_ISQED08Ma}.

\begin{table}[tb]
\renewcommand{\arraystretch}{1}
\centering \caption{Characteristics of Test Cases}
\label{table:benchmarks}
\begin{tabular}{|c|ccccc|}
 \hline \hline
    Case Name   & Gate \#   & Edges \#  & Max Output    & Max Inputs   & Tmin\\
 \hline
    s27.test    & 11    & 19    & 4     & 2     & 20\\
    s208.1.test & 105   & 182   & 19    & 4     & 28\\
    s298.test   & 120   & 250   & 13    & 6     & 24\\
    s382.test   & 159   & 312   & 21    & 6     & 44\\
    s386.test   & 160   & 354   & 36    & 7     & 64\\
    s344.test   & 161   & 280   & 12    & 11    & 46\\
    s349.test   & 162 &284 &12 &11  &46\\
    s444.test   & 182 &358 &22 &6   &46\\
    s526.test   & 194 &451 &13 &6   &42\\
    s526n.test  & 195 &451 &13 &6   &42\\
    s510.test   & 212 &431 &28 &7   &42\\
    s420.1.test & 219 &384 &31 &4   &50\\
    s832.test   & 288 &788 &107 &19   &98\\
    s820.test   & 290 &776 &106 &19   &92\\
    s641.test   & 380 &563 &35 &24   &238\\
    s713.test   & 394 &614 &35 &23   &262\\
    s838.1.test & 447 &788 &55 &4   &80\\
    s1238.test  & 509 &1055 &192 &14   &110\\
    s1488.test  & 654 &1406 &56 &19   &166\\
 \hline \hline
\end{tabular}
\end{table}

\begin{table*}[tb]
 \renewcommand{\arraystretch}{1.2}
 \centering \caption{Comparisons with Optimal ILP and Previous Work \cite{retiming_ASPDAC10Liu}}
 \label{table:result}
\begin{tabular}{|c|c|ccc|ccc|ccc|}
 \hline \hline
 Benchmark & T & \multicolumn{3}{|c}{Power Consumption} & \multicolumn{3}{|c}{Total Slacks}
           &\multicolumn{3}{|c|}{Runtime(s)}\\
 & & Optimal ILP& \cite{retiming_ASPDAC10Liu}&  ours& Optimal ILP & \cite{retiming_ASPDAC10Liu} & ours &Optimal ILP& \cite{retiming_ASPDAC10Liu} &ours\\
 \hline
    s27.test    & 20    &800&824&850        & 40 & 40 & 30      &0.02&0.0& 0.0\\
    s208.1.test & 28    &3542&9118&4772     &1770&290&1988      &0.39&0.44& 0.06\\
    s298.test   & 24    &6498&8888&8010     &1330&660&1240      &0.78&0.69& 0.07\\
    s382.test   & 44    &6456&9038&9958     &3011&2071&1895     &$>$1000&10.56& 0.12\\
    s386.test   & 64    &8836&12870&9564    &2484&807&2324      &4.58&1.03& 0.1\\
    s344.test   & 46    &9876&11848&9894    &1855&1064&1760     &0.82&2.53& 0.09\\
    s349.test   & 46    &9938&12472&9894    &1852&912&1780      &0.79&4.49& 0.11\\
    s444.test   & 46    &8938&14032&11884   &2962&1025&1939     &$>$1000&12.04& 0.12\\
    s526.test   & 42    &7602&14106&11498   &3626&1307&2356     &42.57&1.67& 0.17\\
    s526n.test  & 42    &7752&11734&11548   &3616&2089&2366     &30.32&4.72& 0.17\\
    s510.test   & 42    &13976&17492&14846  &2237&937&2040      &$>$1000&1.62& 0.17\\
    s420.1.test & 50    &4574&17920&9224    &5906&1050&4466     &1.29&16.91& 0.14\\
    s832.test   & 98    &13652&14518&16274  &5175&4525&4171     &71.96&151.26&0.24\\
    s820.test   & 92    &13552&17694&16448  &5261&3493&4103     &68.98&13.18& 0.25\\
    s641.test   & 238   &13334&20408&14424  &7925&6067&7604     &2.24&92.97& 0.26\\
    s713.test   & 262   &13018&21228&14322  &8522&6363&8112     &2.27&121.1& 0.27\\
    s838.1.test & 80    &6004&18898&17556   &14048&9016&9912    &1.48&256.9& 0.4\\
    s1238.test  & 110   &6096&10444&8208    &16764&14635&15792  &0.23&448.6& 0.34\\
    s1488.test  & 166   &21292&23799&27836  &15313&14791&13024  &$>$1000&670.7& 0.53\\
    \hline
    Avg         & -   &9249.3&14070&11947.9 &5457.7&3744.3&4573.8   & - &95.3& 0.19\\
    Diff        & -     & 1 & +52\%& +29\%    & 1 & -31\% & -16\%   & - & 1 & 0.002\\
\hline \hline
\end{tabular}
\end{table*}

In the experiments, a min-period retiming algorithm \cite{retiming_ACM08Zhou} is first employed
to generate the minimum clock period $T$, which is listed in the 2nd column of TABLE \ref{table:result}.
Liu et al.'s \cite{retiming_ASPDAC10Liu} algorithm was implemented for comparison.
Note that algorithm in \cite{retiming_ASPDAC10Liu} can not directly solve discrete slack budgeting problem,
because if sensitive transitive closure graph is used, the timing constraints might be violated after slack budgeting
\cite{slack_DAC96Maji}. Therefore we use a transitive closure graph instead of sensitive transitive closure graph here.
To evaluate the accuracy of our algorithm, the ILP for achieving the optimal solution were also implemented
using an open source ILP solver CBC \cite{cbc}.

Table \ref{table:result} shows comparisons among optimal ILP, algorithm in
\cite{retiming_ASPDAC10Liu} and our algorithm. The column Power Consumption gives actual
power consumption of each circuit and less value means more power can be reduced.
Comparing with optimal solution, our algorithm increases 29\% power consumption while
\cite{retiming_ASPDAC10Liu} increases 52\%. Column Total Slack gives the sum of each gate's slack.
Comparing with optimal solution, our algorithm loses 16\% of slacks while \cite{retiming_ASPDAC10Liu} loses 31\%.
Note that power consumption is not proportional to the slack amount. As for benchmark s27.test, \cite{retiming_ASPDAC10Liu}
and optimal ILP get equal slack amount, but their power consumption is different.
Column Runtime compares the run time of each algorithm. From the results we can find that although optimal ILP can get
optimal solution, its runtime sometimes is unacceptable. Comparing with \cite{retiming_ASPDAC10Liu},
our algorithm can not only generate better design results, but also get nearly $500 \times$ speedup.

\section{Conclusion}
\label{sec:conclusion}

In this paper we have showed that the
retiming and slack budgeting problem can be simultaneously solved by
formulating the problem to a convex cost dual network flow problem.
Both the theoretical analysis and experimental results show the
efficiency of our approach which can not only reduce power consumption but also speedup previous work.

\bibliographystyle{IEEEtran}
\bibliography{../ref/retiming,../ref/Algorithm,../ref/MVS}

\end{document}